\newtheorem{Thm}{Theorem}
\newtheorem{Lem}[Thm]{Lemma}
\theoremstyle{definition}
\newcommand{\bra}[1]{{\left\langle #1 \right|}}
\newcommand{\ket}[1]{{\left| #1 \right\rangle}}
\newcommand{\T}{\mbox{$\mathrm{tr}$}}
\begin{document}
\title{Unified entropy, entanglement measures and monogamy of multi-party entanglement}

\author{Jeong San Kim and Barry C. Sanders }

\address{
 Institute for Quantum Information Science,
 University of Calgary, Alberta T2N 1N4, Canada
} \eads {\mailto{jekim@ucalgary.ca}
}

\date{\today}
\begin{abstract}
We show that restricted shareability of multi-qubit entanglement can
be fully characterized by unified-$(q,s)$ entropy. We provide a
two-parameter class of bipartite entanglement measures, namely
unified-$(q,s)$ entanglement with its analytic formula in two-qubit
systems for $q\geq 1$, $0\leq s \leq1$ and $qs\leq3$. Using
unified-$(q,s)$ entanglement, we establish a broad class of the
monogamy inequalities of multi-qubit entanglement for $q\geq2$,
$0\leq s \leq1$ and $qs\leq3$.
\end{abstract}
\pacs{
03.67.-a, 
03.65.Ud 
03.67.Mn, 
}
\maketitle

\section{Introduction}
Quantum entanglement is a physical resource with various
applications to quantum information and communication processing.
Quantum teleportation uses maximal entanglement between two
particles as a resource to transfer an unknown quantum state from
one particle to another without sending the actual particle itself
~\cite{tele}. The non-local correlation of quantum entanglement also
provides us with secure cryptographic keys~\cite{qkd1, qkd2}.

Whereas classical correlation can be freely shared among parties in
multi-party systems, quantum entanglement is restricted in its
shareability. If a pair of parties are maximally entangled in
multipartite systems, they cannot share entanglement~\cite{ckw,ov}
nor classical correlations~\cite{kw} with the rest of the system,
thus the term {\em monogamy of entanglement}~(MoE)~\cite{T04}.

MoE lies at the heart of many quantum information and communication
protocols. In quantum cryptography, for example, MoE is
fundamentally important because it quantifies how much information
an eavesdropper could potentially obtain about the secret key to be
extracted; the founding principle of quantum cryptographic schemes
that an eavesdropper cannot obtain any information without
disturbance is guaranteed the law of quantum physics, namely MoE
rather than assumptions on the difficulty of computation.

The first mathematical characterization of MoE was established for
three-qubit systems as an inequality in terms of
concurrence~\cite{ww}, which is referred as the
Coffman-Kundu-Wootters (CKW) inequality~\cite{ckw}. The CKW
inequality was generalized for multi-qubit systems~\cite{ov} and for
some cases of multi-qudit systems~\cite{kds}, and dual monogamy
inequalities were also proposed for multi-party quantum
systems~\cite{gbs, bgk, kpoly}.

However, there exist quantum states in higher-dimensional systems
violating CKW inequality~\cite{ou, ks}; thus the CKW inequality in
multi-qubit systems fails in its generalization into
higher-dimensional quantum systems. Moreover, characterizing MoE as
an inequality is not generally true for other entanglement measures
such as {\em entanglement of formation}~(EoF)~\cite{bdsw}; monogamy
inequality in terms of EoF is not valid even in multi-qubit systems.
Thus it is important to have a proper entanglement measures to
characterize MoE not only for the study of general MoE in
higher-dimensional quantum systems but in multi-qubit systems as
well.

The proof of the CKW inequality in multi-qubit systems~\cite{ckw} is
based on the feasibility of analytic evaluation of concurrence for
two-qubit mixed states. In fact, there are various possible
definitions of bipartite entanglement measure using different
entropy functions such as R\'enyi-$\alpha$ and Tsallis-$q$
entropies~\cite{renyi, horo, tsallis, lv}. For selective ranges of
$\alpha$ and $q$, these entanglement measures are tractable in
two-qubit systems, and, moreover, monogamy inequality of multi-qubit
entanglement is feasible in terms of these measures~\cite{ks2, KT}.

Here we establish a unification of monogamy inequalities in
multi-qubit systems. Using unified-$(q,s)$ entropy with real
parameters $q$ and $s$~\cite{ue1, ue2}, we define a class of
bipartite entanglement measures namely {\em unified-$(q,s)$
entanglement}, and show a broad class of monogamy inequalities of
multi-qubit systems in terms of unified-$(q,s)$ entanglement.

Our result shows that unified-$(q,s)$ entanglement contains
concurrence, EoF, R\'enyi-$\alpha$ and Tsallis-$q$ entanglement as
special cases, showing their explicit relation with respect to a
smooth function. Furthermore, our result reduces to every known case
of multi-qubit monogamy inequalities such as R\'enyi and Tsallis
monogamy~\cite{ks2, KT} and the CKW inequality for selective choices
of $q$ and $s$. Thus, our result provides an interpolation of the
previous results about monogamy of multi-qubit entanglement.

This paper is organized as follows. In Section~\ref{Subsec:
definition}, we define unified-$(q,s)$ entanglement for bipartite
quantum states, and provide its relation with concurrence, EoF,
R\'enyi-$q$ entanglement and Tsallis-$q$ entanglement. In
Section~\ref{Subsec: 2formula}, we provide an analytic formula of
unified-$(q,s)$ entanglement in two-qubit systems for $q \geq 1$, $0
\leq s \leq1$ and $qs\leq3$. In Section~\ref{Sec: monopoly}, we
derive a monogamy inequality of multi-qubit entanglement in terms of
unified-$(q,s)$ entanglement for $q \geq 2$, $0 \leq s \leq1$ and
$qs\leq3$. We summarize our results in Section~\ref{Conclusion}.


\section{Unified-$(q,s)$ Entanglement}
\label{Sec: Tqentanglement}

\subsection{Definition}
\label{Subsec: definition}

For a quantum state $\rho$, unified-$(q,s)$ entropy is
\begin{equation}
S_{q,s}(\rho):=\frac{1}{(1-q)s}\left[{\left(\T \rho^{q}\right)}^s-1\right],
\label{uqs-entropy}
\end{equation}
for $q,~s \geq 0$ such that $q \neq 1$ and $s \neq 0$.
Unified-$(q,s)$ entropy converges to
R\'enyi-$q$ entropy~\cite{horo},
\begin{equation}
\lim_{s \rightarrow 0}S_{q,s}(\rho)=\frac{1}{1-q}\log \T \rho^{q}=R_{q}(\rho),
\label{Renyi}
\end{equation}
and also tends to Tsallis-$q$ entropy~\cite{lv},
\begin{equation}
\lim_{s \rightarrow 1}S_{q,s}(\rho)=\frac{1}{1-q}\left(\T \rho^{q}-1\right)=T_{q}(\rho).
\label{Tsallis}
\end{equation}
For the case that $q$ tends to
1, $S_{q,s}(\rho)$ converges to the von Neumann entropy, that is
\begin{eqnarray}
\lim_{q \rightarrow 1}S_{q,s}(\rho)=-\T \rho\log\rho 
=S(\rho). \label{T1}
\end{eqnarray}
Although unified-$(q,s)$ entropy is singular for $q=1$ or $s=0$, we
can consider them to be von Neumann entropy or R\'enyi-$q$ entropy,
respectively. For this reason, we let $S_{1,s}(\rho)\equiv S(\rho)$
and $S_{q,0}(\rho)\equiv R_{q}(\rho)$ for any quantum state $\rho$.

For a bipartite pure state $\ket{\psi}_{AB}$ and each $q,~s \geq 0$,
unified-$(q,s)$ entanglement is
\begin{equation}
E_{q,s}\left(\ket{\psi}_{AB} \right):=S_{q,s}(\rho_A),
\label{TEpure}
\end{equation}
where $\rho_A=\T _{B} \ket{\psi}_{AB}\bra{\psi}$ is the reduced
density matrix for subsystem $A$. For a mixed state $\rho_{AB}$, we
define its  unified-$(q,s)$ entanglement via the convex-roof extension,
\begin{equation}
E_{q,s}\left(\rho_{AB} \right):=\min \sum_i p_i E_{q,s}(\ket{\psi_i}_{AB}),
\label{TEmixed}
\end{equation}
where the minimum is taken over all possible pure state
decompositions of $\rho_{AB}=\sum_{i}p_i
\ket{\psi_i}_{AB}\bra{\psi_i}$.

Because unified-$(q,s)$ entropy converges to R\'enyi and Tsallis
entropies when $s$ tends to 0 and 1 respectively,
\begin{eqnarray}
\lim_{s\rightarrow 0}E_{q,s}\left(\rho_{AB} \right)={\mathcal R}_{q}\left(\rho_{AB} \right),
\label{unirenyi}
\end{eqnarray}
where ${\mathcal R}_{q}\left(\rho_{AB} \right)$ is the R\'enyi-$q$
entanglement of $\rho_{AB}$ ~\cite{ks2}, and
\begin{eqnarray}
\lim_{s\rightarrow 1}E_{q,s}\left(\rho_{AB} \right)={\mathcal T}_{q}\left(\rho_{AB} \right),
\label{uniT}
\end{eqnarray}
where ${\mathcal T}_{q}\left(\rho_{AB} \right)$ is the Tsallis-$q$
entanglement~\cite{KT}. For $q$ tends to 1,
\begin{eqnarray}
\lim_{q\rightarrow1}E_{q,s}\left(\rho_{AB} \right)=E_{\rm f}\left(\rho_{AB} \right),
\end{eqnarray}
where $E_{\rm f}(\rho_{AB})$ is the EoF of $\rho_{AB}$.
Thus unified-$(q,s)$ entanglement is a
two-parameter generalization of EoF.

\subsection{Analytic formula of unified-(q,s) entanglement for two-qubit states}
\label{Subsec: 2formula}

Let us recall concurrence and its
functional relation with EoF in two-qubit systems.
For any bipartite pure state $\ket \psi_{AB}$, its
concurrence,
$\mathcal{C}(\ket \psi_{AB})$ is
\begin{equation}
\mathcal{C}(\ket \psi_{AB})=\sqrt{2(1-\T\rho^2_A)},
\label{pure state concurrence}
\end{equation}
where $\rho_A=\T_B(\ket \psi_{AB}\bra \psi)$~\cite{ww}. For a
mixed state
$\rho_{AB}$, its concurrence is
\begin{equation}
\mathcal{C}(\rho_{AB})=\min \sum_k p_k \mathcal{C}({\ket
{\psi_k}}_{AB}), \label{mixed state concurrence}
\end{equation}
where the minimum is taken over all possible pure state
decompositions, $\rho_{AB}=\sum_kp_k{\ket {\psi_k}}_{AB}\bra
{\psi_k}$.

For a two-qubit pure state $\ket{\psi}_{AB}$ with Schmidt decomposition
\begin{equation}
\ket{\psi}_{AB}=\sqrt{\lambda_0}\ket{00}_{AB}+\sqrt{\lambda_1}\ket{11}_{AB},
\label{schm}
\end{equation}
its reduced density operator of subsystem $A$ is
\begin{equation}
\rho_A=\T_B(\ket \psi_{AB}\bra \psi)=\lambda_0\ket{0}_{A}\bra{0}+\lambda_1\ket{1}_{A}\bra{1}.
\end{equation}
From Eq.~(\ref{pure state concurrence}), we obtain
\begin{equation}
\mathcal{C}(\ket \psi_{AB})=\sqrt{2(1-\T\rho^2_A)}=2\sqrt{\lambda_0\lambda_1},
\label{conlam}
\end{equation}
and, moreover,
\begin{equation}
2\sqrt{\lambda_0\lambda_1}=\left(\T\sqrt{\rho_A}\right)^2-1=
S_{\frac{1}{2}, 2}\left(\rho_A\right)=E_{\frac{1}{2}, 2}\left(\ket{\psi}_{AB}\right),
\label{lamuni}
\end{equation}
where $E_{\frac{1}{2}, 2}\left(\ket{\psi}_{AB}\right)$ is the
unified-$(1/2,2)$ entanglement of $\ket{\psi}_{AB}$. In other words,
unified-$(q,s)$ entanglement of a two-qubit pure state
$\ket{\psi}_{AB}$ coincides with its concurrence for $q=1/2$ and
$s=2$. As both concurrence and unified-$(q,s)$ entanglement of
bipartite mixed states are defined via the convex-roof extension, we
note that unified-$(q,s)$ entanglement of a two-qubit mixed state
reduces to its concurrence when $q=1/2$ and $s=2$;
\begin{equation}
\mathcal{C}(\rho_{AB})=E_{\frac{1}{2}, 2}\left(\rho_{AB}\right),
\label{conuni}
\end{equation}
for a two-qubit state $\rho_{AB}$.

Concurrence has an analytic
formula in two-qubit systems~\cite{ww}. For a two-qubit state $\rho_{AB}$,
\begin{equation}
\mathcal{C}(\rho_{AB})=\max\{0, \lambda_1-\lambda_2-\lambda_3-\lambda_4\},
\label{C_formula}
\end{equation}
where $\lambda_i$'s are the eigenvalues, in decreasing order, of
$\sqrt{\sqrt{\rho_{AB}}\tilde{\rho}_{AB}\sqrt{\rho_{AB}}}$ and
$\tilde{\rho}_{AB}=\sigma_y \otimes\sigma_y
\rho^*_{AB}\sigma_y\otimes\sigma_y$ with the Pauli operator
$\sigma_y$. Furthermore, the relation between concurrence and EoF of
a two-qubit mixed state $\rho_{AB}$ (or a pure state
$\ket{\psi}_{AB} \in \mathbb{C}^2 \otimes \mathbb{C}^{d}$,
$d\geq2$) is given as a monotonically increasing, convex
function such that
\begin{equation}
 E_{\rm f} (\rho_{AB}) = {\mathcal E}(\mathcal{C}\left(\rho_{AB}\right)),
\end{equation}
where
\begin{equation}
{\mathcal E}(x) = H\left(\frac{1-\sqrt{1-x^2}}{2}\right),
\hspace{0.5cm}\mbox{for } 0 \le x \le 1,
\label{eps}
\end{equation}
with the binary entropy function $H(t) = -[t\log t + (1-t)\log
(1-t)]$~\cite{ww}.
The functional relation between concurrence
and EoF as well as the analytic formula of concurrence for two-qubit
states provide an analytic formula
of EoF in two-qubit systems.

Now let us consider the functional relation between unified-$(q,s)$
entanglement and concurrence
for two-qubit states. For any $2\otimes d$ pure
state
$\ket{\psi}_{AB}$ with its Schmidt decomposition
$\ket{\psi}_{AB}=\sqrt{\lambda}\ket{0
0}_{AB}+\sqrt{1-\lambda}\ket{11}_{AB}$, its unified-$(q,s)$
entanglement is
\begin{eqnarray}
E_{q,s}\left(\ket{\psi}_{AB} \right)=S_{q,s}(\rho_A)
=\frac{1}{(1-q)s}\left[\left(\lambda^{q}+\left(1-\lambda\right)^{q}\right)^s-1 \right].
\label{TE2pure}
\end{eqnarray}
Because the concurrence of $\ket{\psi}_{AB}$ is
\begin{eqnarray}
\mathcal{C}(\ket \psi_{AB})=\sqrt{2(1-\T\rho^2_A)}
=2\sqrt{\lambda\left(1-\lambda\right)},
\end{eqnarray}
we have
\begin{equation}
E_{q,s}\left(\ket{\psi}_{AB}
\right)=f_{q,s}\left(\mathcal{C}(\ket \psi_{AB}) \right),
\label{relationpure}
\end{equation}
where $f_{q,s}(x)$ is a differential function
\begin{eqnarray}
f_{q,s}(x):=\frac{\left(\left(1+\sqrt{1-x^2}\right)^{q}
+\left(1-\sqrt{1-x^2}\right)^{q}\right)^s-2^{qs}}{(1-q)s2^{qs}}
\label{f}
\end{eqnarray}
on $0 \leq x \leq 1$. Thus for any $2\otimes d$ pure
state $\ket{\psi}_{AB}$, we have a functional relation between its
concurrence and unified-$(q,s)$ entanglement
for each $q$ and $s$.

We note that $f_{1/2,2}(x)=x$ is the identity function, which also reveals the coincidence of concurrence and
unified-$(1/2,2)$ entanglement in Eq.~(\ref{conuni}).
Furthermore, $f_{q,s}(x)$ converges to ${\mathcal E}(x)$
in Eq.~(\ref{eps}) as $q$ tends to 1, and
it reduces to functions that relate
concurrence with R\'enyi-$q$ entanglement and Tsallis-$q$
entanglement as $s$ tends to $0$ and $1$ respectively~\cite{ks2,
KT}. For two-qubit mixed states, we have the following theorem.

\begin{Thm}
For $q\geq1$, $0 \leq s \leq1$, $qs\leq 3$ and any two-qubit state $\rho_{AB}$,
\begin{equation}
E_{q,s}\left(\rho_{AB}\right)=f_{q,s}\left(\mathcal{C}(\rho_{AB}) \right).
\label{eucmix}
\end{equation}
\label{Thm: 2formula}
\end{Thm}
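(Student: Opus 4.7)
My plan is to follow the strategy that Wootters used to obtain the analytic formula for EoF, and that was adapted to R\'enyi and Tsallis entanglements in~\cite{ks2, KT}. The entire theorem reduces to two ingredients: (i) the pure-state identity (\ref{relationpure}), and (ii) analytic properties of the function $f_{q,s}$ on $[0,1]$. Concretely, I claim it suffices to show that, for $q\geq 1$, $0\leq s\leq 1$ and $qs\leq 3$, the function $f_{q,s}$ is monotonically increasing and convex on $[0,1]$. Granting these two analytic properties, the remainder of the proof is two matching inequalities.

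For the upper bound $E_{q,s}(\rho_{AB})\leq f_{q,s}(\mathcal{C}(\rho_{AB}))$, I would invoke the Wootters decomposition underlying (\ref{C_formula}): there exists a pure-state decomposition $\rho_{AB}=\sum_i p_i\ket{\psi_i}_{AB}\bra{\psi_i}$ with $\mathcal{C}(\ket{\psi_i}_{AB})=\mathcal{C}(\rho_{AB})$ for every $i$. Applying (\ref{relationpure}) term-by-term and then the definition (\ref{TEmixed}) gives
\begin{equation*}
E_{q,s}(\rho_{AB})\leq \sum_i p_i f_{q,s}(\mathcal{C}(\ket{\psi_i}_{AB}))=f_{q,s}(\mathcal{C}(\rho_{AB})).
\end{equation*}
For the lower bound, I would take any optimal decomposition $\rho_{AB}=\sum_i q_i\ket{\phi_i}_{AB}\bra{\phi_i}$ for $E_{q,s}$, apply (\ref{relationpure}) to each pure term, and then use convexity of $f_{q,s}$ together with the fact that $\mathcal{C}$ is itself a convex-roof and hence convex on mixed states, yielding
\begin{equation*}
E_{q,s}(\rho_{AB})=\sum_i q_i f_{q,s}(\mathcal{C}(\ket{\phi_i}_{AB}))\geq f_{q,s}\!\left(\sum_i q_i\mathcal{C}(\ket{\phi_i}_{AB})\right)\geq f_{q,s}(\mathcal{C}(\rho_{AB})),
\end{equation*}
where the last inequality uses monotonicity of $f_{q,s}$. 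Combining both bounds gives the claimed equality (\ref{eucmix}).

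The main obstacle, and the place where all three parameter restrictions $q\geq 1$, $0\leq s\leq 1$, $qs\leq 3$ should be consumed, is the analytic step: showing that $f_{q,s}$ is monotonically increasing and convex on $[0,1]$. A natural substitution is $y=\sqrt{1-x^{2}}\in[0,1]$, so that $f_{q,s}(x)$ becomes proportional to $[(1+y)^{q}+(1-y)^{q}]^{s}-2^{qs}$ composed with $y(x)$. After applying the chain rule to compute $f'_{q,s}$ and $f''_{q,s}$, monotonicity should follow from $q\geq 1$ with $0\leq s\leq 1$, while nonnegativity of $f''_{q,s}$ on $(0,1)$ is the delicate part: the sign condition there is exactly what produces the constraint $qs\leq 3$. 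I would handle monotonicity first (relatively straightforward), then convexity by isolating the dominant term in $f''_{q,s}$ near $x=0$ (where the condition is tightest, corresponding to $y\to 1$) and verifying that $qs\leq 3$ makes the remaining factor nonnegative on the whole interval. The boundary cases $s\to 0$ and $s\to 1$ should reproduce, as a consistency check, the known convexity arguments used in~\cite{ks2, KT} for R\'enyi-$q$ and Tsallis-$q$ entanglements, and the case $(q,s)=(1/2,2)$ reproduces the identity function consistent with (\ref{conuni}).
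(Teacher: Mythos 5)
Your two-inequality skeleton is exactly the paper's proof of Theorem~\ref{Thm: 2formula}: the upper bound via Wootters' optimal decomposition in which every pure state in the ensemble has concurrence equal to $\mathcal{C}(\rho_{AB})$ (the paper's Eqs.~(\ref{Copt})--(\ref{fmixin1})), and the lower bound via an $E_{q,s}$-optimal decomposition plus convexity and monotonicity of $f_{q,s}$ (Eq.~(\ref{fmixin2})). One small imprecision: the last inequality in your lower-bound chain does not need ``convexity of $\mathcal{C}$ on mixed states,'' only the fact that the average pure-state concurrence of any decomposition of $\rho_{AB}$ is at least the minimum defining $\mathcal{C}(\rho_{AB})$, combined with monotonicity of $f_{q,s}$ --- which is what your displayed chain actually uses, so this is harmless.

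The genuine gap is that the analytic ingredient --- $f_{q,s}$ is monotonically increasing and convex on $[0,1]$ for $q\geq1$, $0\leq s\leq1$, $qs\leq3$ --- is only announced as a plan, and this is precisely where all the content of the parameter restrictions lives; the paper devotes Lemma~\ref{Lem: fmonocon} to it, and without it your argument proves nothing beyond the standard convex-roof bookkeeping. Moreover, your proposed route (``isolate the dominant term of $f_{q,s}''$ near $x=0$, where the condition is tightest'') is not substantiated and does not reflect how the constraint actually enters: convexity must be verified on all of $(0,1)$, and in the paper's argument $qs\leq3$ is used globally through the elementary bound $q(1-s)\geq q-3$ applied to one term of the second derivative of $g_{q,s}(x)=-\bigl[\bigl(1+\sqrt{1-x^2}\bigr)^q+\bigl(1-\sqrt{1-x^2}\bigr)^q\bigr]^s$, after which the identities $\Theta+\Xi=2$, $\Theta\Xi=x^2$ (giving Eq.~(\ref{squarsimple})) and the binomial-series bounds of Eq.~(\ref{lower}) reduce the second derivative to the manifestly nonnegative lower bound $4\Lambda\left[2(1-x^2)+(q-3)\left(x^2-x^{2q-2}\right)\right]$ of Eq.~(\ref{2deri4}); monotonicity follows separately and easily from the first derivative in Eq.~(\ref{1deri}) for $q\geq1$. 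Until you carry out such a computation (or an equivalent global convexity argument), the proof is incomplete, since the lemma, not the two matching inequalities, is the nontrivial part of the theorem.
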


We note that Theorem~\ref{Thm: 2formula} together with
the analytic formula of two-qubit concurrence in Eq.~(\ref{C_formula})
provide us with an analytic formula of
unified-$(q,s)$ entanglement in two-qubit systems.
Before we prove Theorem~\ref{Thm: 2formula}, we have the following lemma.

\begin{Lem}
For $q\geq1$, $0 \leq s \leq1$ and $qs\leq 3$, $f_{q,s}(x)$
is a monotonically-increasing convex function on $0\leq x \leq 1$.
\label{Lem: fmonocon}
\end{Lem}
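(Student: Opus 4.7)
The plan is to change variables to $u = \sqrt{1-x^2} \in [0,1]$ and reduce both claims to scalar inequalities in $u$, then exploit a composition identity that lifts convexity from the Tsallis base case. Writing $A(u) = (1+u)^q + (1-u)^q$ and $g(u) = A(u)^s$, we have $f_{q,s}(x) = (g(u) - 2^{qs})/((1-q) s \cdot 2^{qs})$, whose prefactor denominator is nonpositive for $q \geq 1$ and $s > 0$. Monotonicity follows from a sign analysis: the chain rule gives $f'_{q,s}(x) = g'(u)\, u'(x)/((1-q) s \cdot 2^{qs})$, where $g'(u) = s A(u)^{s-1} A'(u) \geq 0$ because $A'(u) = q[(1+u)^{q-1} - (1-u)^{q-1}] \geq 0$ for $q \geq 1$, while $u'(x) = -x/u \leq 0$; hence the numerator is nonpositive, and division by the nonpositive denominator yields $f'_{q,s}(x) \geq 0$. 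The boundary cases $q = 1$ and $s = 0$ follow by continuous extension to the Shannon and R\'enyi limits.

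For convexity, using $u''(x) = -1/u^3$ and $x^2 = 1-u^2$, a second differentiation reduces $f''_{q,s}(x) \geq 0$ to the scalar inequality $(1-u^2)\, u\, g''(u) \leq g'(u)$ on $u \in [0,1]$. The key tool is the composition identity $f_{q,s}(x) = \phi(f_{q,s_0}(x))$ with $\phi(\tau) = [(1+(1-q) s_0 \tau)^{s/s_0} - 1]/((1-q) s)$, valid for any $s_0 > 0$. A direct computation gives $\phi'(\tau) = (1+(1-q) s_0 \tau)^{s/s_0 - 1} > 0$ and $\phi''(\tau) = (s/s_0 - 1)(1-q) s_0 (1+(1-q) s_0 \tau)^{s/s_0 - 2} \geq 0$ whenever $s \leq s_0$ and $q \geq 1$ (both factors $s/s_0 - 1$ and $1-q$ being nonpositive); hence $\phi$ is monotonically-increasing and convex on the range of $f_{q,s_0}$, and these properties transfer from $f_{q,s_0}$ to $f_{q,s}$. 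For $q \in [1,3]$ take $s_0 = 1$ and invoke the Tsallis-$q$ convexity of Ref.~\cite{KT}. For $q > 3$ (where $qs \leq 3$ forces $s \leq 3/q < 1$), take $s_0 = 3/q$, reducing the problem to the convexity of $f_{q, 3/q}$ along the boundary curve $qs = 3$.

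The main obstacle is thus the boundary case $qs = 3$, $q \geq 3$, which I would verify directly from the scalar inequality $(1-u^2)\, u\, g''(u) \leq g'(u)$. Substituting $t = (1-u)/(1+u) \in [0,1]$ converts it into an algebraic inequality in $t$ parametrized by $q$; a Taylor analysis at $u = 0$ shows both sides vanish at first order, and the next-order coefficient of $g'(u) - (1-u^2)\, u\, g''(u)$ is proportional to $(2-3s) q^2 + (2+3s) q - 3$, which under $qs = 3$ reduces to $2(q-2)(q-3/2) \geq 0$ for $q \geq 3$. I would complete the argument by combining this leading-order analysis with the endpoint behavior at $u = 1$ (where $(1-u^2)\, u\, g''(u) = 0 \leq g'(1)$ with slack) and the explicit polynomial structure of $A(u)$ in $t$; the condition $qs = 3$ provides the tight balance that makes the inequality sharp at the maximally entangled state.
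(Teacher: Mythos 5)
Your monotonicity argument is correct (and essentially the paper's), your composition identity $f_{q,s}=\phi\bigl(f_{q,s_0}\bigr)$ with $\phi(\tau)=[(1+(1-q)s_0\tau)^{s/s_0}-1]/((1-q)s)$ increasing and convex for $s\le s_0$, $q\ge1$ is a genuinely nice reduction, your rewriting of $f_{q,s}''\ge0$ as the scalar inequality $(1-u^2)\,u\,g''(u)\le g'(u)$ is right, and even your leading-order coefficient $(2-3s)q^2+(2+3s)q-3$, which becomes $2(q-2)(q-3/2)$ on $qs=3$, checks out. The gap is that the case your reduction leaves over --- the boundary curve $s=3/q$ with $q\ge3$, i.e.\ exactly the regime where the hypothesis $qs\le3$ is active and which covers all $q>3$ --- is never actually proved. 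A Taylor expansion at $u=0$ showing the first nonvanishing coefficient of $F(u)=g'(u)-(1-u^2)u\,g''(u)$ is nonnegative, combined with the trivial observation that the inequality holds with slack at $u=1$, does not give $F\ge0$ on all of $(0,1)$: nothing prevents $F$ from dipping negative in the interior, and the appeal to ``the explicit polynomial structure of $A(u)$ in $t$'' is not an argument ($A(u)=(1+u)^q+(1-u)^q$ is not even polynomial for non-integer $q$, and no global estimate is supplied). So the hardest part of the lemma is left as a sketch; the rest (citing the Tsallis-$q$ convexity of Ref.~\cite{KT} for $q\in[1,3]$, $s_0=1$, and the continuity extensions at $q=1$, $s=0$) is fine.

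For comparison, the paper proves the same scalar inequality uniformly on $q\ge1$, $0\le s\le1$, $qs\le3$ rather than only on the boundary curve: it uses $q(1-s)\ge q-3$, the identity $\left(\Theta^{q-1}-\Xi^{q-1}\right)^2-\left(\Theta^q+\Xi^q\right)\left(\Theta^{q-2}+\Xi^{q-2}\right)=-4x^{2q-4}$, and the binomial-series bounds $\Theta^{q-1}-\Xi^{q-1}\ge2(q-1)\sqrt{1-x^2}$ and $\Theta^{q-1}+\Xi^{q-1}\ge2$, reducing everything to the elementary inequality $2(1-x^2)+(q-3)\left(x^2-x^{2q-2}\right)\ge0$. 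If you want to keep your composition trick (which genuinely shrinks the parameter set to the single curve $qs=3$), you still owe a global argument of that type on the curve --- e.g.\ adapt the paper's estimates with $s=3/q$, or otherwise bound $F(u)$ from below on the whole interval; the local expansion at $u=0$ plus the endpoint value at $u=1$ cannot substitute for it.
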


\begin{proof}
Because $f_{q,s}(x)$ is a differentiable function on $0\leq x\leq1$,
its monotonicity and convexity follow from nonnegativity of its
first and second derivatives. Furthermore, for $q>1$, the
monotonicity and convexity of $f_{q,s}$ follows from those of a
function,
\begin{equation}
g_{q,s}(x):=-\Big[\left(1+\sqrt{1-x^2}\right)^q+\left(1-\sqrt{1-x^2}\right)^q\Big]^s.
\label{gqs}
\end{equation}

For
\begin{equation}
\Theta=1+\sqrt{1-x^2},~ \Xi =1-\sqrt{1-x^2},
\label{thetaxi}
\end{equation}
the first derivative of $g_{q,s}(x)$ is
\begin{equation}
\frac{{\rm d}g_{q,s}(x)}{{\rm d}x}=\frac{qsx}{\sqrt{1-x^2}}
\left( \Theta^q+\Xi^q\right)^{s-1}\left(\Theta^{q-1}-\Xi^{q-1}\right),
\label{1deri}
\end{equation}
which is always nonnegative on $0\leq x\leq1$ for $q\geq1$. For the
second derivative of $g_{q,s}(x)$, we have
\begin{eqnarray}
\frac{{\rm d}^2g_{q,s}(x)}{{\rm d}x^2}
=&\Lambda \frac{\left(\Theta^q+\Xi^q\right)\left(\Theta^{q-1}-\Xi^{q-1}\right)}{\sqrt{1-x^2}}\nonumber\\
&+\Lambda q(1-s)x^2 \left(\Theta^{q-1}-\Xi^{q-1}\right)^2 \nonumber\\
&-\Lambda(q-1)x^2\left(\Theta^q+\Xi^q\right)\left(\Theta^{q-2}+\Xi^{q-2}\right)
\label{2deri}
\end{eqnarray}
with $\Lambda=qs\left(\Theta^q+\Xi^q\right)^{s-2}/(1-x^2)$.

Because $q(1-s)\geq q-3$ for $qs\leq 3$, we have
\begin{eqnarray}
\frac{{\rm d}^2g_{q,s}(x)}{{\rm d}x^2}
&\geq&\Lambda \left(\Theta^q+\Xi^q\right)
\left[\frac{\left(\Theta^{q-1}-\Xi^{q-1}\right)}{\sqrt{1-x^2}}
-2x^2\left(\Theta^{q-2}+\Xi^{q-2}\right)\right]\nonumber\\
&&+\Lambda (q-3)x^2\left[ \left(\Theta^{q-1}-\Xi^{q-1}\right)^2
-\left(\Theta^q+\Xi^q\right)\left(\Theta^{q-2}+\Xi^{q-2}\right)\right]\nonumber\\
&=&\Lambda\left(\Theta^q+\Xi^q\right)\left[\frac{\left(\Theta^{q-1}-\Xi^{q-1}\right)}{\sqrt{1-x^2}}
-2x^2\left(\Theta^{q-2}+\Xi^{q-2}\right)\right]\nonumber\\
&&-4\Lambda(q-3)x^{2q-2},
\label{2deri2}
\end{eqnarray}
where the equality is given by
\begin{eqnarray}
\left(\Theta^{q-1}-\Xi^{q-1}\right)^2-\left(\Theta^q+\Xi^q\right)\left(\Theta^{q-2}+\Xi^{q-2}\right)=-4x^{2q-4},
\label{squarsimple}
\end{eqnarray}
which is obtained from
\begin{equation}
\Theta+\Xi=2,~\Theta\Xi=x^2.
\end{equation}
From the equality
\begin{equation}
\frac{\Theta^{q-1}-\Xi^{q-1}}{\sqrt{1-x^2}}
=2\left(\Theta^{q-2}+\Xi^{q-2}\right)
+\frac{x^2\left(\Theta^{q-3}-\Xi^{q-3}\right)}{\sqrt{1-x^2}},
\label{ThetaXisqu}
\end{equation}
Eq.~(\ref{2deri2}) becomes
\begin{eqnarray}
\frac{{\rm d}^2g_{q,s}(x)}{{\rm d}x^2}
&\geq&\Lambda\left(\Theta^q+\Xi^q\right)\left[
2(1-x^2)\left(\Theta^{q-2}+\Xi^{q-2}\right)+\frac{x^2\left(\Theta^{q-3}-\Xi^{q-3}\right)}{\sqrt{1-x^2}}
\right]\nonumber\\
&&-4\Lambda(q-3)x^{2q-2}.
\label{2deri3}
\end{eqnarray}

Let us consider the binomial series for $\Theta^{q-1}$ and
$\Xi^{q-1}$,
\begin{eqnarray}
\Theta^{q-1}&=&{\left( 1+ \sqrt{1-x^2}\right)}^{q-1}\nonumber\\
&=&1+(q-1)\sqrt{1-x^2}+\frac{(q-1)(q-2)}{2!}
{\left(\sqrt{1-x^2}\right)}^2+R_1
\label{binomial1}
\end{eqnarray}
and
\begin{eqnarray}
\Xi^{q-1}&=&{\left(1-\sqrt{1-x^2}\right)}^{q-1}\nonumber\\
&=&1-(q-1)\sqrt{1-x^2}+\frac{(q-1)(q-2)}{2!}
{\left(\sqrt{1-x^2}\right)}^2+R_2,
\label{binomial2}
\end{eqnarray}
with remainder terms
\begin{eqnarray}
R_1&=&\sum_{k=3}^{\infty}\frac{(q-1)\cdots(q-k)}{k!}{\left(\sqrt{1-x^2}\right)}^k,\nonumber\\
R_2&=&\sum_{k=3}^{\infty}\frac{(q-1)\cdots(q-k)}{k!}{\left(-\left(\sqrt{1-x^2}\right)\right)}^k.
\label{R1R2}
\end{eqnarray}
Thus we have
\begin{eqnarray}
\Theta^{q-1}-\Xi^{q-1}=2(q-1)\sqrt{1-x^2}+R_1-R_2\geq 2(q-1)\sqrt{1-x^2},\nonumber\\
\Theta^{q-1}+\Xi^{q-1}=2+R_1+R_2 \geq 2,
\label{lower}
\end{eqnarray}
for non-negative constants $R_1-R_2$ and $R_1+R_2$.

From inequality (\ref{2deri3}) together with (\ref{lower}), we have
\begin{eqnarray}
\frac{{\rm d}^2g_{q,s}(x)}{{\rm d}x^2}
\geq&4\Lambda \left[2(1-x^2)+(q-3)(x^2-x^{2q-2})\right],
\label{2deri4}
\end{eqnarray}
where the right-hand side of the inequality is always nonnegative
for $0\leq x\leq 1$ and $q\geq1$. Thus $f_{q,s}(x)$ is monotonically
increasing and convex on $0\leq x\leq 1$ for $q \geq 1$, $0 \leq s
\leq1$ and $qs\leq 3$.
\end{proof}

We note that the monotonicity and convexity of $f_{q,s}(x)$ for
$q\geq1$ and $qs\leq1$ are strict in the sense that the first and
second derivatives of $f_{q,s}(x)$ are strictly positive for
$0<x<1$. Now we prove Theorem~\ref{Thm: 2formula}, which relates
concurrence with unified-$(q,s)$ entanglement for two-qubit mixed
states.

\begin{proof}[Proof of Theorem~\ref{Thm: 2formula}]
For a two-qubit mixed state $\rho_{AB}$ and its concurrence
$\mathcal{C}(\rho_{AB})$, there exists an optimal decomposition of
$\rho_{AB}$, in which every pure-state concurrence has the same
value~\cite{ww}; there exists a pure-state decomposition
$\rho_{AB}=\sum_{i}p_i \ket{\phi_i}_{AB}\bra{\phi_i}$ such that
\begin{equation}
\mathcal{C}(\rho_{AB})=\sum_i p_i \mathcal{C}({\ket {\phi_i}}_{AB}),
\label{Copt}
\end{equation}
and
\begin{equation}
\mathcal{C}(\ket {\phi_i}_{AB})=\mathcal{C}(\rho_{AB}),
\label{Cphii}
\end{equation}
for each $i$.
Thus we have
\begin{eqnarray}
f_{q,s}\left(\mathcal{C}(\rho_{AB}) \right)&=&f_{q,s}\left(\sum_i
p_i\mathcal{C}(\ket{\phi_i}_{AB})\right)\nonumber\\
&=&\sum_i
p_if_{q,s}\left(\mathcal{C}(\ket{\phi_i}_{AB})\right)\nonumber\\
&=&\sum_i
p_iE_{q,s}(\ket{\phi_i}_{AB})\nonumber\\
&\geq&E_{q,s}\left(\rho_{AB}\right).
\label{fmixin1}
\end{eqnarray}

Conversely, the existence of the optimal decomposition of
$\rho_{AB}=\sum_j q_j\ket{\mu_j}_{AB}\bra{\mu_j}$ for unified-$(q,s)$
entanglement leads us to
\begin{eqnarray}
E_{q,s}\left(\rho_{AB}\right)&=&\sum_j q_jE_{q,s}\left(\ket{\mu_j}_{AB}\right)\nonumber\\
&=&\sum_j q_jf_{q,s}\left({\mathcal C}(\ket{\mu_j}_{AB})\right)\nonumber\\
&\geq&f_{q,s}\left(\sum_j q_j{\mathcal C}(\ket{\mu_j}_{AB})\right)\nonumber\\
&\geq&f_{q,s}\left({\mathcal C}(\rho_{AB})\right),\nonumber\\
\label{fmixin2}
\end{eqnarray}
where the first and second inequalities are due to the convexity and
monotonicity of $f_{q,s}(x)$ in Lemma~\ref{Lem: fmonocon}. From
Inequalities (\ref{fmixin1}) and (\ref{fmixin2}), we have
\begin{equation}
E_{q,s}\left(\rho_{AB} \right)=f_{q,s}\left(\mathcal{C}(\rho_{AB}) \right)
\label{relationmixed}
\end{equation}
for $q\geq1$, $0 \leq s \leq1$, $qs\leq3$ and any two-qubit mixed
state $\rho_{AB}$
\end{proof}

Due to the continuity of $f_{q,s}(x)$ with respect to $q$ and $s$,
we can always assure this functional relation between
unified-$(q,s)$ entanglement and concurrence in two-qubit systems
for $q$ slightly less than 1 or $qs$ slightly larger than 3.


\section{Multi-qubit monogamy of entanglement in terms of unified-$(q,s)$ Entanglement}
\label{Sec: monopoly}

The monogamous property of a multi-qubit pure state
$\ket{\psi}_{A_1A_2\cdots A_n}$ has been shown to be
\begin{equation}
\mathcal{C}_{A_1 (A_2 \cdots A_n)}^2  \geq  \mathcal{C}_{A_1 A_2}^2
+\cdots+\mathcal{C}_{A_1 A_n}^2, \label{nCmono}
\end{equation}
where $\mathcal{C}_{A_1 (A_2 \cdots
A_n)}=\mathcal{C}(\ket{\psi}_{A_1(A_2\cdots A_n)})$ is the
concurrence of $\ket{\psi}_{A_1A_2\cdots A_n}$ with respect to the
bipartite cut between $A_1$ and the others, and
$\mathcal{C}_{A_1A_i}=\mathcal{C}(\rho_{A_1A_i})$ is the concurrence
of the reduced density matrix $\rho_{A_1A_i}$ for $i=2,\ldots,
n$~\cite{ckw,ov}.
Here, we show that this monogamy of
multi-qubit entanglement can also be characterized in terms of
unified-$(q,s)$ entanglement. Before we prove multi-qubit monogamy relation of
unified-$(q,s)$ entanglement, we provide an
important property of the function $f_{q,s}(x)$.

\begin{Lem}
For $q\geq2$, $0\leq s \leq1$ and $qs\leq3$,
\begin{equation}
h_{q,s}(x,y):=f_{q,s}\left(\sqrt{x^2+y^2}\right)-f_{q,s}(x)-f_{q,s}(y)\geq 0,
\label{eq: fposi}
\end{equation}
on the domain ${\mathcal D}=\{ (x,y)| 0\leq x, y, x^2+y^2 \leq1\}$.
\label{fposi}
\end{Lem}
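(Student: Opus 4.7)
The plan is to recast the two-variable inequality as a one-dimensional super-additivity problem and then reduce that to a convexity statement. Setting $u := x^2$ and $v := y^2$ and defining
\[
\phi(u) := f_{q,s}\bigl(\sqrt{u}\bigr), \qquad u \in [0,1],
\]
the assertion $h_{q,s}(x,y) \geq 0$ on $\mathcal{D}$ is equivalent to
\[
\phi(u+v) \geq \phi(u) + \phi(v), \qquad u,v \geq 0,~ u+v \leq 1.
\]
Direct substitution into (\ref{f}) gives $\phi(0) = f_{q,s}(0) = 0$. A standard convexity argument then takes over: if $\phi$ is convex on $[0,1]$ with $\phi(0)=0$, then writing $u = \frac{u}{u+v}(u+v) + \frac{v}{u+v}\cdot 0$ and applying convexity produces $\phi(u) \leq \frac{u}{u+v}\phi(u+v)$, and analogously $\phi(v) \leq \frac{v}{u+v}\phi(u+v)$; summing yields super-additivity. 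Thus the entire problem reduces to showing that $\phi$ is convex on $[0,1]$.

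To handle this convexity, I would differentiate twice and use the chain rule to obtain
\[
\phi''(u) \;=\; \frac{x\, f_{q,s}''(x) - f_{q,s}'(x)}{4x^{3}}\,\bigg|_{x=\sqrt{u}},
\]
so convexity of $\phi$ is equivalent to the single-variable inequality $x f_{q,s}''(x) \geq f_{q,s}'(x)$ on $(0,1]$, i.e.\ that $x \mapsto f_{q,s}'(x)/x$ is nondecreasing there. I would then rewrite $f_{q,s}'$ and $f_{q,s}''$ in terms of the variables $\Theta$, $\Xi$ introduced in (\ref{thetaxi}), exploiting the identities $\Theta+\Xi=2$ and $\Theta\Xi=x^2$, and proceeding as in the proof of Lemma~\ref{Lem: fmonocon} (cf.\ (\ref{2deri2})--(\ref{2deri4})). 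The binomial-series bounds
\[
\Theta^{q-1}-\Xi^{q-1} \geq 2(q-1)\sqrt{1-x^2}, \qquad \Theta^{q-1}+\Xi^{q-1}\geq 2,
\]
from (\ref{lower}), together with the hypothesis $qs\leq 3$ (used as in the step $q(1-s)\geq q-3$), would reduce the remaining expression to a manifestly nonnegative polynomial in $x$ on $[0,1]$.

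The main obstacle is this final weighted-convexity estimate. The inequality $x f_{q,s}''(x)\geq f_{q,s}'(x)$ is strictly stronger than the bare convexity $f_{q,s}''(x)\geq 0$ of Lemma~\ref{Lem: fmonocon}, so the range $q\geq 1$ allowed there no longer suffices and the hypothesis must be tightened to $q\geq 2$; the point is that the extra factor of $1/x$ in $f'_{q,s}(x)/x$ contributes a new negative term whose absorption requires $q-1\geq 1$. Controlling the sign of the $(q-3)$- and $(1-s)$-type correction terms in the second derivative when $s<1$ and $q$ is close to $3/s$ is the delicate part, and it is precisely here that the combined assumptions $q\geq 2$, $0\leq s\leq 1$, and $qs\leq 3$ are actually needed.
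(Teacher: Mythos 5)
Your reduction is correct and takes a genuinely different route from the paper. You convert the two-variable claim into super-additivity of $\phi(u)=f_{q,s}(\sqrt{u})$ and then into convexity of $\phi$ on $[0,1]$ together with $\phi(0)=0$; since $x\mapsto u=x^2$ maps $\mathcal D$ onto $\{u,v\ge0,\,u+v\le1\}$, this is exactly the statement of Lemma~\ref{fposi}, and the convexity-plus-$\phi(0)=0$ argument for super-additivity is sound. The paper instead works with $h_{q,s}(x,y)$ directly: it rules out interior critical points by showing the auxiliary function $n_{q,s}$ is strictly increasing, and then handles the boundary arc $x^2+y^2=1$ through the one-variable function $m_{q,s}$ in (\ref{mqs}) and the sign analysis of $b(q,s)=2^{qs-1}-(q-1)2^s$. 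Your route, if completed, buys a shorter proof that dispenses with the boundary analysis entirely, at the price of proving a pointwise inequality, $xf_{q,s}''(x)\ge f_{q,s}'(x)$, that is strictly stronger than the convexity of Lemma~\ref{Lem: fmonocon}. It is worth noting that this inequality is equivalent (after multiplying by the positive constant $(q-1)s2^{qs}$ and dividing by $x$) to ${\rm d}^2g_{q,s}/{\rm d}x^2-n_{q,s}(x)\ge0$, which is precisely the quantity the paper bounds in (\ref{g2dn})--(\ref{g2dn3}) when establishing the monotonicity of $n_{q,s}$; the resulting lower bound $4\Omega(q-3)(x^4-x^{2q-2})$ is nonnegative on $[0,1]$, and the sharpness you anticipate at $qs=3$ indeed shows up as equality at $x=0$. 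So the missing estimate in your sketch is true on the stated domain and is essentially already computed in the paper, just put to a different use.

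One caution about the phrase ``proceeding as in the proof of Lemma~\ref{Lem: fmonocon}'': the binomial lower bound $\Theta^{q-1}-\Xi^{q-1}\ge2(q-1)\sqrt{1-x^2}$ in (\ref{lower}) actually reverses for $2\le q<3$ (convexity of $z\mapsto z^{q-2}$ fails there), and the bookkeeping of the $(q-3)$ prefactor changes sign across $q=3$, so the reduction to ``a manifestly nonnegative polynomial'' does not go through verbatim over the whole range $q\ge2$, $qs\le3$; you need to split into the regimes $q\ge3$ and $2\le q\le3$ (the latter is easy, e.g.\ for it one can check directly that $[(1+\sqrt{1-u})^q+(1-\sqrt{1-u})^q]^s$ is concave in $u$), and treat $q=3$ separately as the paper does. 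With that case analysis supplied, your argument closes, and it also correctly explains why the threshold is $q\ge2$: for $q$ near $1$ the function $\phi$ fails to be convex, consistent with the failure of monogamy for EoF.
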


\begin{proof}
In fact, Inequality (\ref{eq: fposi}) was already shown
for $s=0$~\cite{ks2} and $s=1$~\cite{KT}. Thus it is enough to
consider the case that $0 < s <1$.

As $h_{q,s}(x,y)$ is differentiable on domain $\mathcal D$, its maximum or
minimum values arise
only at the critical points or on the boundary of $\mathcal D$. By
taking the first-order partial derivatives, we have
the gradient of $h_{q,s}(x,y)$ as
\begin{equation}
\nabla h_{q,s}(x, y)=\left(\frac{\partial
h_{q,s}(x,y)}{\partial x}, \frac{\partial
h_{q,s}(x,y)}{\partial y}\right)
\label{grad}
\end{equation}
where
\begin{eqnarray}
\frac{\partial h_{q,s}(x,y)}{\partial x}&=
&\Gamma \frac{qsx\left[{\left(1+\sqrt{1-x^2}\right)}^{q}+
{\left(1-\sqrt{1-x^2}\right)}^{q}\right]^{s-1}}{\sqrt{1-x^2}}
\nonumber\\
&&\times
\left[{\left(1+\sqrt{1-x^2}\right)}^{q-1}-{\left(1-\sqrt{1-x^2}\right)}^{q-1}\right]\nonumber\\
&-&\Gamma \frac{qsx\left[{\left(1+\sqrt{1-x^2-y^2}\right)}^{q}+
{\left(1-\sqrt{1-x^2-y^2}\right)}^{q}\right]^{s-1}}{\sqrt{1-x^2-y^2}}\nonumber\\
&&\times\left[{\left(1+\sqrt{1-x^2-y^2}\right)}^{q-1}-
{\left(1-\sqrt{1-x^2-y^2}\right)}^{q-1}\right],\nonumber\\
\frac{\partial h_{q,s}(x,y)}{\partial y}&=
&\Gamma \frac{qsy\left[{\left(1+\sqrt{1-y^2}\right)}^{q}+
{\left(1-\sqrt{1-y^2}\right)}^{q}\right]^{s-1}}{\sqrt{1-y^2}}
\nonumber\\
&&\times
\left[{\left(1+\sqrt{1-y^2}\right)}^{q-1}-{\left(1-\sqrt{1-y^2}\right)}^{q-1}\right]\nonumber\\
&-&\Gamma \frac{qsy\left[{\left(1+\sqrt{1-x^2-y^2}\right)}^{q}+
{\left(1-\sqrt{1-x^2-y^2}\right)}^{q}\right]^{s-1}}{\sqrt{1-x^2-y^2}}\nonumber\\
&&\times\left[{\left(1+\sqrt{1-x^2-y^2}\right)}^{q-1}-{\left(1-\sqrt{1-x^2-y^2}\right)}^{q-1}\right]
\label{2pderi}
\end{eqnarray}
with $\Gamma=1/\left[(1-q)s2^{sq}\right]$.

Now, let us suppose there exists $(x_0, y_0)$ in the interior of the domain
${\mathcal D}^{\circ}=\{ (x,y)| 0< x,~y,~x^2+y^2 <1\}$
such that $\nabla h_{q,s}(x_0,
y_0)=(0,0)$. From Eq.~(\ref{2pderi}), it is straightforward to verify
that $\nabla h_{q,s}(x_0, y_0)=(0,0)$ implies
\begin{equation}
n_{q,s}(x_0)=n_{q,s}(y_0),
\label{x0y0}
\end{equation}
for a differentiable function
\begin{eqnarray}
n_{q,s}(x)&:=&\frac{qs}{\sqrt{1-x^2}}
\left[{\left(1+\sqrt{1-x^2}\right)}^{q}+{\left(1-\sqrt{1-x^2}\right)}^{q}\right]^{s-1}\nonumber\\
&&\times\left[{\left(1+\sqrt{1-x^2}\right)}^{q-1}-{\left(1-\sqrt{1-x^2}\right)}^{q-1}\right],
\label{n_q}
\end{eqnarray}
defined on $0<x<1$. Here we first show that $n_{q,s}(x)$ is a
strictly increasing function for $0 < x < 1$, and thus
Eq.~(\ref{x0y0}) implies $x_0=y_0$.

Let us consider the first derivative of $n_{q,s}(x)$. Because
$xn_{q,s}(x)={{\rm d}g_{q,s}(x)}/{{\rm d}x}$, where ${{\rm
d}g_{q,s}(x)}/{{\rm d}x}$ is in Eq.~(\ref{1deri}), we have
\begin{eqnarray}
\frac{{\rm d}n_{q,s}(x)}{{\rm d}x}=\frac{1}{x}\left(\frac{{\rm d^2}g_{q,s}(x)}{{\rm d}x^2}-n_{q,s}(x)\right),
\label{nderi1}
\end{eqnarray}
for non-zero $x$. To show $n_{q,s}(x)$ is strictly increasing
function, it is thus enough to show that ${{\rm
d^2}g_{q,s}(x)}/{{\rm d}x^2}-n_{q,s}(x)>0$ for $0< x< 1$. By using
$\Theta$ and $\Xi$ (\ref{thetaxi}), we have
\begin{eqnarray}
\frac{{\rm d^2}g_{q,s}(x)}{{\rm d}x^2}-n_{q,s}(x)&=&
\Omega\frac{x^2\left(\Theta^q+\Xi^q\right)\left(\Theta^{q-1}-\Xi^{q-1}\right)}{\sqrt{1-x^2}}\nonumber\\
&&+\Omega q(1-s)x^2\left(\Theta^{q-1}-\Xi^{q-1}\right)^2\nonumber\\
&&-\Omega(q-1)x^2\left(\Theta^q+\Xi^q\right)\left(\Theta^{q-2}+\Xi^{q-2}\right),
\label{g2dn}
\end{eqnarray}
with
$\Omega=qs\left(\Theta^q+\Xi^q \right)^{s-2}/\left( 1-x^2\right)$.

Because $q(1-s)\geq q-3$ for $qs\leq3$,
\begin{eqnarray}
\frac{{\rm d^2}g_{q,s}(x)}{{\rm d}x^2}-n_{q,s}(x)&\geq&
\Omega x^2\left(\Theta^q+\Xi^q\right)\left[\frac{\left(\Theta^{q-1}-\Xi^{q-1}\right)}{\sqrt{1-x^2}}
-2\left(\Theta^{q-2}+\Xi^{q-2}\right)\right]\nonumber\\
&&+\left[\left(\Theta^{q-1}-\Xi^{q-1}\right)^2-
\left(\Theta^q+\Xi^q\right)\left(\Theta^{q-2}+\Xi^{q-2}\right)\right]\nonumber\\
&&\times\Omega(q-3)x^2.
\label{g2dn2}
\end{eqnarray}
Furthermore, from the relation $\Theta-\Xi=2\sqrt{1-x^2}$, we have
\begin{eqnarray}
\frac{\Theta^{q-1}-\Xi^{q-1}}{\sqrt{1-x^2}}
-2\left(\Theta^{q-2}+\Xi^{q-2}\right)=&\frac{x^2\left(\Theta^{q-3}-\Xi^{q-3}\right)}{\sqrt{1-x^2}}.
\label{g2dn2rel}
\end{eqnarray}
Together with Eq.~(\ref{squarsimple}) and Inequalities (\ref{lower}), we have
\begin{eqnarray}
\frac{{\rm d^2}g_{q,s}(x)}{{\rm d}x^2}-n_{q,s}(x) \geq&
4\Omega(q-3)(x^4-x^{2q-2}). \label{g2dn3}
\end{eqnarray}

Here we note that the right-hand side of the inequality~(\ref{g2dn3})
is strictly positive for $0<x<1$ when $q\neq 3$; therefore,
$n_{q,s}(x)$ is a strictly increasing function for $q\neq 3$ and $qs\leq 3$.
For the case that $q=3$, we have
\begin{equation}
n_{3,s}(x)=12s\left(8-6x^2\right)^{s-1},
\label{n3s}
\end{equation}
which is also a strictly increasing function for $0<s<1$. In other
words, $n_{q,s}(x)$ is a strictly increasing function for $q\geq2$,
$0<s<1$ and $qs\leq3$, therefore Eq.~(\ref{x0y0}) implies $x_0=y_0$.
However, from Eq.~(\ref{2pderi}), ${\partial
h_{q,s}(x_0,x_0)}/{\partial x}=0$ also implies that
$n_{q,s}(x_0)=n_{q,s}(\sqrt{2}x_0)$ for some $x_0 \in (0,1)$, which
contradicts the strict monotonicity of $n_{q,s}(x)$; $n_{q,s}(x)$
has non-vanishing gradient in ${\mathcal D}^{\circ}$ for $q\geq1$
and $qs\leq 3$.

Now let us consider the function value of $h_{q,s}(x, y)$ on the
boundary of the domain $\partial\mathcal D=\{(x,y)|
x=0~or~y=0~or~x^2+y^2=1\}$. If either $x$ or $y$ is 0, then it is
clear that $h_{q,s}(x, y)=0$. For the case that $x^2+y^2=1$,
$h_{q,s}(x, y)$ is reduced to a single-variable function,
\begin{eqnarray}
l_{q,s}(x)
:=&\frac{1}{(q-1)s2^{qs}}\left(\left(1+\sqrt{1-x^2}\right)^{q}+\left(1-\sqrt{1-x^2}\right)^{q}\right)^{s}\nonumber\\
&+\frac{1}{(q-1)s2^{qs}}\left[\left(\left(1+x\right)^{q}+\left(1-x\right)^{q}\right)^{s}-2^s-2^{qs}\right].
\label{lqs}
\end{eqnarray}
In other words, the nonnegativity of $h_{q,s}(x,y)$ for $q\geq2$ and
$qs\leq 3$ follows from that of the differentiable function
\begin{eqnarray}
m_{q,s}(x):=&\left(\left(1+\sqrt{1-x^2}\right)^{q}+\left(1-\sqrt{1-x^2}\right)^{q}\right)^{s}\nonumber\\
&+\left(\left(1+x\right)^{q}+\left(1-x\right)^{q}\right)^{s}-2^s-2^{qs}.
\label{mqs}
\end{eqnarray}

From the derivative of $m_{q,s}(x)$,
\begin{eqnarray}
\frac{{\rm d}m_{q,s}(x)}{{\rm d}x}&=&sq\left[\left(1+x\right)^{q}+\left(1-x\right)^{q}\right]^{s-1}
\left[\left(1+x\right)^{q-1}-\left(1-x\right)^{q-1}\right]\nonumber\\
&&-\frac{sqx}{\sqrt{1-x^2}}\left[\left(1+\sqrt{1-x^2}\right)^{q}+
\left(1-\sqrt{1-x^2}\right)^{q}\right]^{s-1}\nonumber\\
&&\times\left[\left(1+\sqrt{1-x^2}\right)^{q-1}-\left(1-\sqrt{1-x^2}\right)^{q-1}\right],
\label{mqsderi}
\end{eqnarray}
we note that $x=1/\sqrt{2}$ is the only critical point of
$m_{q,s}(x)=0$ on $0 < x  <1$. Because $m_{q,s}(0)=m_{q,s}(1)=0$ and
$m_{q,s}(x)$ has only one critical point on $0 < x  <1$,
$m_{q,s}(x)$ is either nonnegative or nonpositive through the whole
range of $0\leq x \leq1$.

To show $m_{q,s}(x)$ is nonnegative on $0\leq x \leq1$, we show its
nonnegativity for $x$ near $1$. Let us consider the derivative of
$m_{q,s}(x)$ as $x$ approaches $1$. From Eq.~(\ref{mqsderi}) and
Inequalities (\ref{lower}), we note that
\begin{eqnarray}
\frac{{\rm d}m_{q,s}(x)}{{\rm d}x}&\leq& sq\left[\left(1+x\right)^{q}+\left(1-x\right)^{q}\right]^{s-1}
\left[\left(1+x\right)^{q-1}-\left(1-x\right)^{q-1}\right]\nonumber\\
&&-sqx2(q-1)\times\left[\left(1+\sqrt{1-x^2}\right)^{q}+\left(1-\sqrt{1-x^2}\right)^{q}\right]^{s-1}
\label{mqsderi2}
\end{eqnarray}
for $0\leq x \leq1$; therefore
\begin{eqnarray}
\lim_{x\rightarrow 1}\frac{{\rm d}m_{q,s}(x)}{{\rm d}x} \leq sq\left[2^{qs-1}-(q-1)2^s\right].
\label{derilim}
\end{eqnarray}

For $qs \leq 3$ and $0 \leq s\leq1$, $2^{qs-1}$ in the right-hand
side of Inequality (\ref{derilim}) is bounded above by $4$, whereas
$(q-1)2^s\geq q-1$. Thus, the right-hand side of Inequality
(\ref{derilim}) is always negative for $q >5$. In other words,
$m_{q,s}(x)$ is a decreasing function as $x$ approaches to $1$ with
$m_{q,s}(1)=0$, and thus $m_{q,s}(x)$ is a nonnegative function for
$q>5$.

For $q\leq5$, we consider the function value of a two-variable function
$b(q,s)=2^{qs-1}-(q-1)2^s$
on the compact domain ${\mathcal D}_2=\{(q,s)| 2\leq q \leq 5, 0\leq s\leq1, qs \leq 3\}$.
The first-order partial derivatives of $b(q,s)$ are
\begin{eqnarray}
\frac{\partial b(q,s)}{\partial q}=2^{qs-1}s\log2-2^s,
\label{bderiq}
\end{eqnarray}
and
\begin{eqnarray}
\frac{\partial b(q,s)}{\partial s}=2^{qs-1}q\log2-(q-1)2^s\log2.
\label{bderis}
\end{eqnarray}

If we assume $b(q,s)$ has a critical point at $(q_0, s_0)$ in the interior of the
domain ${\mathcal D}_2^{\circ}=\{(q,s)| 2 < q < 5, 0< s<1, qs < 3\}$,
Eq.~(\ref{bderiq}) implies
\begin{equation}
\frac{\partial b(q_0,s_0)}{\partial q}=2^{q_0s_0-1}s_0\log2-2^{s_0}=0.
\label{bderiq0}
\end{equation}
Furthermore, from Eq.~(\ref{bderis}) together with Eq.~(\ref{bderiq0}), we have
\begin{eqnarray}
\frac{\partial b(q_0,s_0)}{\partial s}&=&2^{q_0s_0-1}q_0\log2-(q_0-1)2^{s_0}\log2\nonumber\\
&=&\frac{2^{s_0}}{s_0}\left[q_0-(q_0-1)s_0\log2\right].
\label{bderis0}
\end{eqnarray}

However $s_0\log2$ is strictly less than $1$, therefore
Eq.~(\ref{bderis0}) is always nonzero in the interior of the domain.
In other words, for any  $(q_0, s_0)$ in the interior of the domain,
${\partial b(q_0,s_0)}/{\partial s}$ is always nonzero conditioned
${\partial b(q_0,s_0)}/{\partial q}=0$.  Thus $b(q,s)$ has no
vanishing gradient in the interior of the domain. Furthermore, it is
also direct to verify that $b(q,s)$ is non-positive on the boundary
of the domain, and thus $b(q,s)$ is non-positive for $2\leq q \leq
5$, $0<s<1$ and $qs \leq 3$.

Thus $m_{q,s}(x)$ is nonnegative on $0 \leq x \leq1$ for $0\leq
s\leq 1$, $q\geq 2$ and $qs \leq3$, and this implies nonnegativity
of $h_{q,s}(x,y)$ for the same domain of $q$ and $s$.
\end{proof}

The following theorem yields
a multi-qubit monogamy inequality in terms of unified-$(q,s)$
entanglement.

\begin{Thm}
For $q\geq2$, $0\leq s \leq1$, $qs\leq3$ and a multi-qubit state
$\rho_{A_1 \cdots A_n}$, we have
\begin{equation}
E_{q,s}\left( \rho_{A_1(A_2 \cdots A_n)}\right)\geq
E_{q,s}(\rho_{A_1 A_2}) +\cdots+E_{q,s}(\rho_{A_1
A_n}) \label{Umono}
\end{equation}
where $E_{q,s}\left( \rho_{A_1(A_2 \cdots A_n)}\right)$ is the
unified-$(q,s)$ entanglement of $\rho_{A_1\left(A_2 \cdots
A_n\right)}$ with respect to the bipartite cut between $A_1$ and
$A_{2}\cdots A_{n}$, and $E_{q,s}(\rho_{A_1 A_i})$ is the
unified-$(q,s)$ entanglement of the reduced state
$\rho_{A_1 A_i}$ for $i=2,\cdots,n$. \label{Thm: mono}
\end{Thm}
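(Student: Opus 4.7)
The plan is to reduce to the pure-state case first and then extend to mixed states by the convex-roof construction, using the two preceding lemmas together with Theorem~\ref{Thm: 2formula} and the known CKW-type concurrence monogamy of Eq.~(\ref{nCmono}).

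For a pure state $\ket{\psi}_{A_1 A_2 \cdots A_n}$, the bipartite cut $A_1|A_2\cdots A_n$ is of type $2 \otimes 2^{n-1}$, so the Schmidt rank is at most two and the relation $E_{q,s}(\ket{\psi}_{A_1(A_2\cdots A_n)}) = f_{q,s}(\mathcal{C}_{A_1(A_2\cdots A_n)})$ of Eq.~(\ref{relationpure}) applies. Each reduced two-qubit state $\rho_{A_1 A_i}$ obeys Theorem~\ref{Thm: 2formula}, giving $E_{q,s}(\rho_{A_1 A_i}) = f_{q,s}(\mathcal{C}_{A_1 A_i})$. I would then chain Lemma~\ref{fposi} inductively: setting $C_k := \sqrt{\sum_{i=2}^{k} \mathcal{C}_{A_1 A_i}^2}$, the pair $(C_k, \mathcal{C}_{A_1 A_{k+1}})$ lies in $\mathcal{D}$ because $C_{k+1}^2 \leq \mathcal{C}_{A_1(A_2\cdots A_n)}^2 \leq 1$, and Lemma~\ref{fposi} yields $f_{q,s}(C_{k+1}) \geq f_{q,s}(C_k) + f_{q,s}(\mathcal{C}_{A_1 A_{k+1}})$; iterating from $k=2$ to $k=n-1$ gives
\begin{equation}
f_{q,s}\Bigl(\sqrt{\textstyle\sum_{i=2}^{n}\mathcal{C}_{A_1 A_i}^2}\Bigr) \geq \sum_{i=2}^{n} f_{q,s}(\mathcal{C}_{A_1 A_i}).
\end{equation}
Combining this with the concurrence monogamy $\mathcal{C}_{A_1(A_2\cdots A_n)}^2 \geq \sum_{i=2}^n \mathcal{C}_{A_1 A_i}^2$ in Eq.~(\ref{nCmono}) and the monotonicity of $f_{q,s}$ from Lemma~\ref{Lem: fmonocon} completes the pure-state inequality:
\begin{equation}
E_{q,s}(\ket{\psi}_{A_1(A_2\cdots A_n)}) = f_{q,s}(\mathcal{C}_{A_1(A_2\cdots A_n)}) \geq \sum_{i=2}^{n} f_{q,s}(\mathcal{C}_{A_1 A_i}) = \sum_{i=2}^{n} E_{q,s}(\rho_{A_1 A_i}).
\end{equation}

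To pass to a mixed state $\rho_{A_1 \cdots A_n}$, I would take an optimal pure-state decomposition $\rho_{A_1 \cdots A_n}=\sum_k p_k \ket{\psi_k}\bra{\psi_k}$ that achieves the unified-$(q,s)$ entanglement across the $A_1|A_2\cdots A_n$ cut. Writing $\rho^{k}_{A_1 A_i} := \mathrm{tr}_{\text{others}}\ket{\psi_k}\bra{\psi_k}$, the pure-state inequality applied termwise gives
\begin{equation}
E_{q,s}(\rho_{A_1(A_2\cdots A_n)}) = \sum_k p_k E_{q,s}(\ket{\psi_k}_{A_1(A_2\cdots A_n)}) \geq \sum_{i=2}^n \sum_k p_k E_{q,s}(\rho^{k}_{A_1 A_i}).
\end{equation}
The convex-roof construction of $E_{q,s}$ ensures it is convex under mixing, hence $\sum_k p_k E_{q,s}(\rho^k_{A_1 A_i}) \geq E_{q,s}(\rho_{A_1 A_i})$ because the $\rho^k_{A_1 A_i}$ average to $\rho_{A_1 A_i}$. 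Summing over $i$ yields inequality (\ref{Umono}).

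The main obstacle is the inductive step: invoking Lemma~\ref{fposi} repeatedly requires checking that every intermediate pair $(C_k, \mathcal{C}_{A_1 A_{k+1}})$ stays inside $\mathcal{D}$, and this in turn rests on the CKW bound $\sum_i \mathcal{C}^2_{A_1 A_i} \leq \mathcal{C}^2_{A_1(A_2\cdots A_n)} \leq 1$. Beyond that, the argument is essentially a transparent composition of the analytic two-qubit identity in Theorem~\ref{Thm: 2formula}, the monotonicity/convexity from Lemma~\ref{Lem: fmonocon}, the subadditivity inequality from Lemma~\ref{fposi}, and the convexity of the convex-roof $E_{q,s}$.
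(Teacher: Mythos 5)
Your proposal is correct and follows essentially the same route as the paper's proof: the pure-state case via the functional relation $E_{q,s}=f_{q,s}(\mathcal{C})$ for the $2\otimes 2^{n-1}$ cut, Theorem~\ref{Thm: 2formula} for the reduced two-qubit states, the CKW bound with monotonicity of $f_{q,s}$, and iterated use of Lemma~\ref{fposi}, followed by the convex-roof/optimal-decomposition argument for mixed states. Your bottom-up accumulation of the sum (building $C_k$ and checking the domain $\mathcal{D}$ explicitly) versus the paper's top-down peeling of one term at a time is only a cosmetic reordering of the same telescoping step, and your explicit domain check is a small but welcome addition the paper leaves implicit.
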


\begin{proof}
We first prove the theorem for $n$-qubit pure state
$\ket{\psi}_{A_1\cdots A_n}$. Note Inequality Eq.~(\ref{nCmono}) is
equivalent to
\begin{equation}
\mathcal{C}_{A_1 (A_2 \cdots A_n)} \geq  \sqrt{\mathcal{C}_{A_1
A_2}^2 +\cdots+\mathcal{C}_{A_1 A_n}^2}, \label{nCmonoroot}
\end{equation}
for any $n$-qubit pure state $\ket{\psi}_{A_1 (A_2 \cdots A_n)}$.
Thus, from Lemma~\ref{fposi} together with Eq.~(\ref{nCmonoroot}),
we have
\begin{eqnarray}
E_{q,s}\left(\ket{\psi}_{A_1 (A_2 \cdots A_n)} \right)&=&
f_{q,s}\left(\mathcal{C}_{A_1 (A_2 \cdots A_n)}\right)\nonumber\\
&\geq&
f_{q,s}\left(\sqrt{\mathcal{C}_{A_1 A_2}^2 +\cdots+\mathcal{C}_{A_1 A_n}^2}\right)\nonumber\\
&\geq&f_{q,s}\left(\mathcal{C}_{A_1
A_2}\right)+f_{q,s}\left(\sqrt{\mathcal{C}_{A_1 A_3}^2
+\cdots+\mathcal{C}_{A_1 A_n}^2}\right)\nonumber\\
&&~~~~~~~\vdots\nonumber\\
&\geq& f_{q,s}\left(\mathcal{C}_{A_1 A_2}\right)+\cdots+f_{q,s}\left(\mathcal{C}_{A_1 A_n}\right)\nonumber\\
&=& E_{q,s}\left(\rho_{A_1A_2}\right)+\cdots +E_{q,s}\left(\rho_{A_1A_n}\right), \label{monoineq}
\end{eqnarray}
where the first equality is by the functional relation between the
concurrence and the unified-$(q,s)$ entanglement for $2\otimes d$ pure
states, the first inequality is by the monotonicity of $f_{q,s}(x)$,
the other inequalities are by iterative use of Lemma~\ref{fposi},
and the last equality is by Theorem~\ref{Thm: 2formula}.

For an $n$-qubit mixed state  $\rho_{A_1(A_2\cdots A_n)}$, let
$\rho_{A_1(A_2\cdots A_n)}=\sum_j p_j \ket{\psi_j}_{A_1(A_2\cdots
A_n)}\bra{\psi_j}$ be an optimal decomposition such that
$E_{q,s}\left(\rho_{A_1(A_2\cdots A_n)}\right)=\sum_j p_j
E_{q,s}\left(\ket{\psi_j}_{A_1(A_2\cdots A_n)}\right)$. Because each
$\ket{\psi_j}_{A_1(A_2\cdots A_n)}$ in the decomposition is an
$n$-qubit pure state, we have
\begin{eqnarray}
E_{q,s}\left(\rho_{A_1(A_2\cdots A_n)}\right)&=&\sum_j p_j
E_{q,s}\left(\ket{\psi_j}_{A_1(A_2\cdots
A_n)}\right)\nonumber\\
&\geq&\sum_j p_j\left(E_{q,s}\left(\rho^j_{A_1A_2}\right)+\cdots
+E_{q,s}\left(\rho^j_{A_1A_n}\right) \right)\nonumber\\
&=&\sum_j p_jE_{q,s}\left(\rho^j_{A_1A_2}\right)+\cdots
+\sum_j p_jE_{q,s}\left(\rho^j_{A_1A_n}\right) \nonumber\\
&\geq&E_{q,s}\left(\rho_{A_1A_2}\right)+\cdots +E_{q,s}\left(\rho_{A_1A_n}\right), \label{Tmonomixed}
\end{eqnarray}
where the last inequality is by definition of unified-$(q,s)$
entanglement for each $\rho_{A_1A_i}$.
\end{proof}

Theorem~\ref{Thm: mono} is a direct consequence of Lemma~\ref{fposi}
when there is a functional relation between unified-$(q,s)$
entanglement and concurrence in two-qubit systems. Here we note that
Lemma~\ref{fposi} is also a necessary condition for multi-qubit
monogamy inequality in terms of unified-$(q,s)$ entanglement: for a
three-qubit W-class state~\cite{DVC}
\begin{equation}
\ket{\mathrm{W}}_{ABC}=a\ket{100}_{ABC}+b\ket{001}_{ABC}+c\ket{010}_{ABC}
\label{3W}
\end{equation}
with $|a|^2+|b|^2+|c|^2=1$, it is straightforward to verify that
\begin{eqnarray}
&\mathcal{C}\left(\ket{\mathrm{W}}_{A(BC)}\right)=\sqrt{2|a|^2\left(|b|^2+|c|^2\right)},\nonumber\\
&\mathcal{C}\left(\rho_{AB}\right)=\sqrt{2|a|^2|b|^2},~
\mathcal{C}\left(\rho_{AC}\right)=\sqrt{2|a|^2|c|^2},
\label{wC}
\end{eqnarray}
where $\mathcal{C}\left(\ket{\mathrm{W}}_{A(BC)}\right)$ is the concurrence of
$\ket{\mathrm{W}}_{ABC}$ with respect to the bipartite cut between $A$ and $BC$, and
$\mathcal{C}\left(\rho_{AB}\right)$ and $\mathcal{C}\left(\rho_{AC}\right)$
are the concurrences of the reduced density matrices
$\rho_{AB}=\T_C\ket{\mathrm{W}}_{ABC}\bra{\mathrm{W}}$ and
$\rho_{AC}=\T_B\ket{\mathrm{W}}_{ABC}\bra{\mathrm{W}}$ respectively.
In other words, the CKW inequality (\ref{nCmono})
is saturated by $\ket{\mathrm{W}}_{ABC}$,
\begin{equation}
\mathcal{C}\left(\ket{\mathrm{W}}_{A(BC)}\right)^2
=\mathcal{C}\left(\rho_{AB}\right)^2+\mathcal{C}\left(\rho_{AC}\right)^2.
\label{wsatu}
\end{equation}

Now suppose there is $(x_0,y_0)$ in the domain ${\mathcal D}$ of the
function $h_{q,s}(x,y)$ in Lemma~\ref{fposi} where the
inequality~(\ref{eq: fposi}) does not hold;
\begin{equation}
f_{q,s}\left(\sqrt{x_0^2+y_0^2}\right)-f_{q,s}(x_0)-f_{q,s}(y_0)<0.
\label{eq: fneg}
\end{equation}
In this case, we can always find a W-class state in Eq.~(\ref{3W}) such that
\begin{eqnarray}
x_0=\sqrt{2|a|^2|b|^2}=\mathcal{C}\left(\rho_{AB}\right),~
y_0=\sqrt{2|a|^2|c|^2}=\mathcal{C}\left(\rho_{AC}\right),
\label{abcxy}
\end{eqnarray}
and thus
\begin{eqnarray}
E_{q,s}\left(\ket{\mathrm{W}}_{{A(BC)}}\right)&=
&f_{q,s}\left(\mathcal{C}\left(\ket{\mathrm{W}}_{A(BC)}\right)\right)\nonumber\\
&=&f_{q,s}\left(\sqrt{\mathcal{C}\left(\rho_{AB}\right)^2+\mathcal{C}\left(\rho_{AC}\right)^2}\right)\nonumber\\
&<&f_{q,s}\left(\mathcal{C}\left(\rho_{AB}\right)\right)+
f_{q,s}\left(\mathcal{C}\left(\rho_{AC}\right)\right)\nonumber\\
&=&E_{q,s}\left(\rho_{AB}\right)+E_{q,s}\left(\rho_{AC}\right),
\label{Wcount1}
\end{eqnarray}
which is a violation of the inequality in (\ref{Umono}). Thus,
Lemma~\ref{fposi} is a necessary and sufficient condition for
multi-qubit monogamy inequality in terms of unified-$(q,s)$
entanglement.

Although unified-$(q,s)$ entanglement reduces to concurrence when
$q=1/2$ and $s=2$, this case does not satisfy the condition of
Theorem~\ref{Thm: mono} for multi-qubit monogamy inequality.
However, we note that the CKW inequality (\ref{nCmono})
characterizes the monogamy of multi-qubit entanglement in terms of
squared concurrence rather than concurrence itself. In fact,
Inequality~(\ref{wsatu}) also implies that monogamy inequality of
multi-qubit entanglement fails if we use concurrence rather than its
square; for non-zero $\mathcal{C}\left(\rho_{AB}\right)$ and
$\mathcal{C}\left(\rho_{AC}\right)$ in (\ref{wsatu}), we have
\begin{equation}
\mathcal{C}\left(\ket{\mathrm{W}}_{A(BC)}\right)^2
=\mathcal{C}\left(\rho_{AB}\right)^2+\mathcal{C}\left(\rho_{AC}\right)^2\lneqq
\left(\mathcal{C}\left(\rho_{AB}\right)+\mathcal{C}\left(\rho_{AC}\right)\right)^2,
\label{wsatu2}
\end{equation}
and thus
\begin{equation}
\mathcal{C}\left(\ket{\mathrm{W}}_{A(BC)}\right)\lneqq
\mathcal{C}\left(\rho_{AB}\right)+\mathcal{C}\left(\rho_{AC}\right).
\label{wsatu2}
\end{equation}
Strictly speaking, concurrence does not show monogamy inequality of
two-qubit entanglement whereas its square does in forms of CKW inequality.

For a bipartite pure state $\ket{\psi}_{AB}$, the squared
concurrence is also referred as tangle
\begin{equation}
\tau\left(\ket{\psi_{AB}}\right):=\mathcal{C}\left(\ket{\psi_{AB}}\right)^2=2\left(1-\T \rho_A^2\right),
\label{tpur}
\end{equation}
and it is also extended to mixed states via the convex-roof
extension,
\begin{equation}
\tau\left(\rho_{AB}\right):=\min\sum_i p_i \left(\mathcal{C}(\ket{\psi_i}_{AB})\right)^2
=\min\sum_i p_i \tau(\ket{\psi_i}_{AB}), \label{tmixed}
\end{equation}
among all the pure state ensembles representing $\rho_{AB}$~\cite{ckw}.
Thus tangle is always an upper bound of the squared concurrence for bipartite mixed
state~\cite{var},
\begin{eqnarray}
\tau\left(\rho_{AB}\right)&=&\min\sum_i p_i \mathcal{C}(\ket{\psi_i}_{AB})^2\nonumber\\
&\geq&\left(\min\sum_i p_i \left(\mathcal{C}(\ket{\psi_i}_{AB})\right)\right)^2\nonumber\\
&=&\mathcal{C}(\rho_{AB})^2.
\label{tC}
\end{eqnarray}

In two-qubit systems, however, Eqs.~(\ref{Copt}) and (\ref{Cphii})
imply the existence of an optimal decomposition of $\rho_{AB}$, in
which every pure-state concurrence has the same value, and thus
Inequality~(\ref{tC}) is always saturated in two-qubit systems;
\begin{eqnarray}
\tau\left(\rho_{AB}\right)=\mathcal{C}(\rho_{AB})^2,
\label{tC2}
\end{eqnarray}
for any two-qubit state $\rho_{AB}$. In other words, the CKW inequality~(\ref{nCmono})
can be rephrased as
\begin{equation}
\tau\left(\rho_{A_1(A_2\cdots A_n)}\right) \geq
\tau\left(\rho_{A_1A_2} \right)+\cdots +
\tau\left(\rho_{A_1A_n}\right), \label{nTmonomixed}
\end{equation}
for any $n$-qubit state $\rho_{A_1A_2\cdots A_n}$~\cite{ov}.

Here we note that tangle is in fact a special case of unified-$(q,s)$ entanglement.
For $q=2$, $s=1$ and a bipartite pure state$\ket{\psi}_{AB}$, we have
\begin{equation}
E_{2,1}\left(\ket{\psi}_{AB}\right)=S_{2,1}\left(\rho_A \right)
=1-\T \rho_A^2=\frac{\tau\left(\ket{\psi}_{AB}\right)}{2}.
\label{tuni}
\end{equation}
As both tangle and unified entanglement are extended to mixed states
via the convex-roof extension, tangle can be considered as
unified-$(2,1)$ entanglement up to a constant factor; therefore
Inequality~(\ref{nTmonomixed}) is equivalent to the monogamy
inequality in terms of unified-$(2,1)$ entanglement. In other words,
Inequality (\ref{Umono}) in Theorem~\ref{Thm: mono} reduces to the
CKW inequality when $q=2$, $s=1$.
\begin{figure}
\begin{center}
\includegraphics[width=7cm]{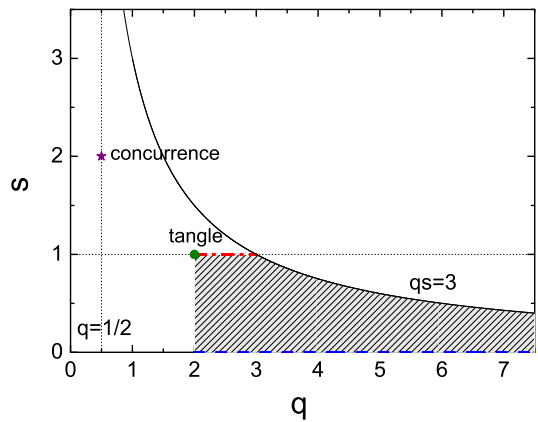}\\
\caption{(Color online) The domain of $q$ and $s$ where multi-qubit monogamy inequality holds
in terms of unified-$(q,s)$ entanglement.
The dashed line indicates the domain for which the multi-qubit monogamy inequality holds for R\'enyi-$q$ entanglement,
and the dashed-dot line is the domain for Tsallis-$q$ entanglement.
The shaded range is for unified-$(q,s)$ entanglement.}
\end{center}
\label{fig1}
\end{figure}

We also note that Inequality (\ref{Umono}) is reduced to the
R\'enyi-$q$ monogamy inequality~\cite{ks2}
\begin{equation}
{\mathcal R}_{q}\left( \rho_{A_1(A_2 \cdots A_n)}\right)\geq
{\mathcal R}_{q}(\rho_{A_1 A_2}) +\cdots+{\mathcal R}_{q}(\rho_{A_1
A_n})
\label{Rmono}
\end{equation}
for $s \rightarrow 0$. For the case that $s \rightarrow 1$,
Inequality (\ref{Umono}) reduces to the Tsallis-$q$ monogamy
inequality~\cite{KT}
\begin{equation}
{\mathcal T}_{q}\left( \rho_{A_1(A_2 \cdots A_n)}\right)\geq
{\mathcal T}_{q}(\rho_{A_1 A_2}) +\cdots+{\mathcal T}_{q}(\rho_{A_1
A_n}). \label{Rmono}
\end{equation}
Thus, Theorem~\ref{Thm: mono} provides an interpolation between
R\'enyi and Tsallis monogamy inequalities as well as the CKW
inequality, which is illustrated in Fig.~1.

We further note that the continuity of unified-$(q,s)$ entropy also
guarantees multi-qubit monogamy inequality in terms of
unified-$(q,s)$ entanglement when $q$ and $s$ are slightly outside
of the proposed domain in Fig.~1.


\section{Conclusion}
\label{Conclusion}

Using unified-$(q,s)$ entropy, we have provided a two-parameter
class of bipartite entanglement measures, namely unified-$(q,s)$
entanglement with an analytical formula in two-qubit systems for
$q\geq 1$, $0\leq s \leq1$ and $qs\leq3$. Based on this unified
formalism of entropies, we have established a broad class of
multi-qubit monogamy inequalities in terms of unified-$(q,s)$
entanglement for $q\geq2$, $0\leq s \leq1$ and $qs\leq3$.

Our new class of monogamy inequalities reduces to every known case
of multi-qubit monogamy inequality such as the CKW inequality,
R\'enyi and Tsallis monogamy inequalities for selective choices of
$q$ and $s$. Our result also provides a necessary and sufficient
condition for a multi-qubit monogamy inequality in terms of
unified-$(q,s)$ entanglement. Furthermore, the explicit relation
between different monogamy inequalities was derived with respect to
a smooth function $f_{q,s}(x)$. Thus, our result provides a useful
methodology to understand the monogamous property of multi-party
entanglement.

\section*{Acknowledgments}
This work was supported by {\it i}CORE, MITACS and USARO. BSC is
supported by a CIFAR Fellowship.

\section*{References}


\begin{thebibliography}{9}

\bibitem{tele}
Bennett~C~H, Brassard~G, Crepeau~C, Jozsa~R, Peres~A and
Wootters~W~K 1993 Phys. Rev. Lett. {\bf 70} 1895

\bibitem{qkd1}
Bennett~C~H and Brassard~G 1984
{\em Quantum Cryptography: Public Key Distribution and Coin Tossing}
in Proceedings of IEEE International Conference on Computers, Systems, and Signal
Processing (IEEE Press, New York, Bangalore, India) pp. 175--179

\bibitem{qkd2}
Bennett~C~H 1992
Phys. Rev. Lett. {\bf 68} 3121

\bibitem{ckw}
Coffman~V, Kundu~J and Wootters~W~K 2000
{\it Phys. Rev.} A {\bf 61} 052306

\bibitem{ov}
Osborne~V and Verstraete F~2006
Phys. Rev. Lett. {\bf 96} 220503

\bibitem{kw}
Koashi M and Winter A 2004 Phys. Rev. A {\bf69} 022309

\bibitem{T04}
Terhal B M 2004
IBM J. Research and Development {\bf48} 71

\bibitem{ww}
Wootters~W~K 1998
Phys. Rev. Lett. {\bf 80} 2245

\bibitem{kds}
Kim~J~S, Das~A and Sanders~B~C 2009
Phys. Rev. A {\bf 79} 012329

\bibitem{gbs}
Gour~G, Bandyopadhay~S and Sanders~B~C 2007
J. Math. Phys. {\bf 48} 012108

\bibitem{bgk}
Buscemi~F, Gour~G and Kim~J~S 2009
Phys. Rev. A {\bf 80} 012324

\bibitem{kpoly}
Kim~J~S 2009
Phys. Rev. A {\bf 80} 022302

\bibitem{ou}
Ou~Y 2007
Phys. Rev. A {\bf 75} 034305

\bibitem{ks}
Kim~J~S and Sanders~B~C 2008
J. Phys. A: Math. and Theor. {\bf 41} 495301

\bibitem{bdsw}
Bennett~C~H, DiVincenzo~D~P, Smolin~J~A and Wootters~W~K~1996
Phys. Rev. A {\bf 54} 3824

\bibitem{renyi}
R\'enyi A 1960
{\em On Measures of Information and Entropy}
in Proceedings of the Fourth Berkeley
Symposium on Mathematics, Statistics and Probability
(Berkeley University Press, Berkeley, CA) pp. 547--561

\bibitem{horo}
Horodecki~R, Horodecki~P and Horodecki M~1996
Phys. Lett. A {\bf 210} 377

\bibitem{tsallis}
Tsallis~C 1988 J. Stat. Phys. {\bf 52} 479

\bibitem{lv}
Landsberg~P~T and Vedral~V 1998 Phys. Lett. A {\bf 247} 211

\bibitem{ue1}
Hu~X and Ye~Z 2006 J. Math. Phys. {\bf 47} 023502

\bibitem{ue2}
Rastegin~A~E 2010 arXiv: 1012.5356

\bibitem{ks2}
Kim~J~S and Sanders~B~C 2010
J. Phys. A: Math. and Theor. {\bf 43} 445305

\bibitem{KT}
Kim~J~S 2010
Phys. Rev. A. {\bf 81} 062328

\bibitem{DVC}
D\"{u}r~W, Vidal~G and Cirac J~I 2000
Phys. Rev. A {\bf 62} 062314

\bibitem{var}
Because $\tau\left(\rho_{AB}\right)$ is the minimum average of squared concurrence,
the difference between $\tau\left(\rho_{AB}\right)$ and ${\mathcal C} \left(\rho_{AB}\right)^2$ is
the {\em variance}, which is always nonnegative.
Furthermore, the equality holds if every
pure-state concurrence in the decomposition has the same value.
\end{thebibliography}
\end{document}